\renewcommand\bibsection%
\begin{document}

\title{The Routing of Complex Contagion in Kleinberg's Small-World Networks}
\author{Wei Chen\inst{1} \and Qiang Li\inst{2}
	\and Xiaoming Sun\inst{2} \and Jialin Zhang\inst{2}}

\institute{Microsoft Research
\and
Institute of Computing Technology, Chinese Academy of Sciences}
\maketitle
\begin{abstract}
In Kleinberg's small-world network model, strong ties are modeled as deterministic edges in the
	underlying base grid and weak ties are modeled as random edges connecting remote nodes.
The probability of connecting a node $u$ with node $v$ through a weak tie is proportional to
	$1/|uv|^\alpha$, where $|uv|$ is the grid distance between $u$ and $v$ and $\alpha\ge 0$ is the
	parameter of the model.
Complex contagion refers to the propagation mechanism in a network where each node is activated only
	after $k \ge 2$ neighbors of the node are activated.
	
In this paper, we propose the concept of routing
	of complex contagion (or {\em complex routing}), where 
	at each time step we can select one eligible node (nodes already having two active neighbors)
	to activate, with the
	goal of activating the pre-selected target node in the end. 
We consider decentralized routing scheme where only the links connected to already activated
	nodes are known to the selection strategy.
We study the routing time of complex contagion and compare the result with simple routing and complex diffusion
(the diffusion of complex contagion,
where all eligible nodes are activated immediately
in the same step with the goal of activating all nodes in the end).

We show that for decentralized complex routing, the routing time is lower bounded by a polynomial in $n$ (the number of nodes in the network)
	for all range of $\alpha$ both in expectation and with high probability
	(in particular, $\Omega(n^{\frac{1}{\alpha+2}})$ for $\alpha \le 2$ and
		$\Omega(n^{\frac{\alpha}{2(\alpha+2)}})$ for $\alpha > 2$ in expectation).
Our results indicate that
	complex routing is exponentially harder than both simple routing and
	complex diffusion at the sweetspot of $\alpha =2 $.
\end{abstract}

\keywords{Computational social science, complex contagion,
diffusion, decentralized routing, small-world networks, social networks}

\section{Introduction}

Social networks are known to be the medium for spreading disease, information, ideas, innovations, and other types of behaviors.
Social scientists have been studying social networks and diffusions in the networks for decades,
	and many of the research results are inspirational to researches in the intersection of
	social science, economics, and computation on modeling social networks and diffusions in them.

In the seminal work~\cite{Granovetter1973strength,Granovetter1974getting}, Granovetter classified relationships in a social network
	as strong ties and weak ties.
Strong ties represent close relationships, such as family members and close friends, while weak ties
	represent acquaintance relationship that people casually maintain.
The surprising result in this study is that people often obtain important job referrals leading to their
	current jobs through weak ties instead of strong ties, which leads to the popular term
	{\em the strength of weak ties}.
His research demonstrated the importance of weak ties in information diffusion in social networks.
Another famous experiment related to information diffusion is Milgram's small-world experiment
	\cite{Milgram1967small}, in which Milgram asked subjects to forward a letter to their friends
		in order for the letter to reach a person not known to the initiator of the letter.
The result showed that on average it takes only six hops to connect two people in U.S. unknown to each other,
	hence the famous term of {\em six-degree of separation}.

The above studies motivated the modeling of small-world networks \cite{Watts1998collective,Kleinberg2000small}.
Watts and Strogatz modeled the small-world network as a ring where nodes close to one another in ring
	distance are connected representing strong ties, and some strong ties are rewired to connect
	to other random nodes on the ring, which represent weak ties \cite{Watts1998collective}.
They also proposed short diameter (the distance between any pair of nodes is small) and
	high clustering coefficient (the probability that two friends of a node are also friends of each
	other) as two characteristics of small-world networks.
Kleinberg~\cite{Kleinberg2000small} improved the model of Watts and Strogatz by building a small-world network on top of a base
	grid, where grid edges representing strong ties, and each node $u$ initiating a weak tie
	connecting to another node $v$ with probability proportional to $1/|uv|^{\alpha}$, where
	$|uv|$ is the grid distance between $u$ and $v$ and $\alpha$ is the small-world parameter.
Kleinberg showed that when $\alpha$ equals the dimension of the grid, the decentralized greedy routing,
	where in each routing step the current node routes the message to its neighbor with grid distance
	closest to the target node, achieves efficient routing performance \cite{Kleinberg2000small}.
This efficient decentralized routing behavior qualitatively matches the result of Milgram's small-world
	experiment.
Kleinberg further showed that when $\alpha$ is not equal to the grid dimension, no decentralized routing
	scheme could be efficient, and in particular, the small-world model of Newman and Watts \cite{Newman1999scaling} corresponds
	to the one-dimensional Kleinberg's model with $\alpha = 0$.
Kleinberg's small-world network model is the one we use in this paper.

In another work \cite{Granovetter1978threshold}, Granovetter proposed the threshold model to characterize
	diffusions of rumors, innovations, or riot behaviors.
An individual in a social network is activated by a certain behavior only when the number of her neighbors
	already adopting the behavior exceeds a threshold.
This threshold model motivated the linear threshold, fixed threshold, and general threshold models
	proposed by Kempe et al. \cite{Kempe2003maximizing}, and
	is directly related to the model of complex contagion we use in this paper.

More recently, Centola and Macy \cite{Centola2007complex} classified the threshold model
	into simple contagion and complex contagion.
Simple contagion refers to diffusion models with threshold being one on every node, which means that
	a node can be activated as long as there is one active neighbor.
Simple contagion corresponds to diffusions of virus or simple information, where one can get activated
	by simply receiving the virus and information.
Complex contagion, on the other hand, refers to diffusion models with threshold at least two, meaning that
	a node can be activated only after multiple of its neighbors are activated.
Complex contagion corresponds to diffusions requiring complex decision process by individuals, such as
	adopting a costly new product, adopting a disruptive innovation, etc, where people usually need
	multiple independent sources of confirmation about the utility of the new product or new innovation
	before taking the action.
The important point Centola and Macy argued is that, while weak ties are effective in transmitting
	information quickly across a long range in a network, they may not be as effective in complex contagion.
This is
	because for complex contagions to spread quickly in a network, it requires weak ties forming
	not only
	long bridges connecting different regions of the network but also {\em wide} bridges in the sense that
	many weak ties can work together to bring the contagion from one region of the network to another
	region of the network.

Motivated by the above work, Ghasemiesfeh et al. provided the first analytical study of complex contagion
	in small-world networks \cite{Ghasemiesfeh2013complex}. 
They studied the diffusion of $k$-complex contagion (or {\em $k$-complex diffusion}),
	where all nodes have threshold $k$ and all nodes with at least $k$ active neighbors are activated right away. 
They showed that the {\em diffusion time}, which is the time for the diffusion to
	activate all nodes in a network starting from $k$ initial seed nodes connected
	with strong ties, is polylogarithmic to the size of the network when $\alpha = 2$. 
Ebrahimi~\cite{ebrahimi2015complex} further generalized the results 
	and proved that the diffusion time for $k$-complex diffusion has polylogarithmic upper bound when $\alpha \in (2,\frac{2(k^2+k+1)}{k+1})$ in Kleinberg's grid model.
They also show that in Kleinberg's model with $\alpha$ outside this range, the diffusion time is lower
	bounded by a polynomial in $n$.

In this paper, we go beyond the diffusion of complex contagion (or {\em complex diffusion}), to
	study a new propagation phenomenon closer to decentralized routing in \cite{Kleinberg2000small},
	which we call the {\em routing of complex contagion} (or {\em complex routing}).
In complex routing, we model weak ties as directed edges as in \cite{Kleinberg2000small}, and
	study the time for two seed nodes connected by a strong tie to activate
	a target node $t$ farthest on the grid (we call it the {\em routing time}).
At each step
	only one new node can be activated, and the decision of which node to activate is decentralized which means it is only based on
	the current activated nodes and their outgoing weak tie neighbors as well as the underlying grid, same
	as decentralized routing in \cite{Kleinberg2000small}.
Such decentralized routing behavior corresponds to real-world phenomenon where a group of people
	want to influence a target person by influencing intermediaries between the source group
	and the target person, and influencing these intermediaries requires effort and thus has to be
	carried out one at a time.
Active friending \cite{YangHLC13} is an application similar to the above scenario
	recently proposed in the context of
	online social networks such as Facebook for increasing the chance of a target user accepting
	the friending request from the source.

\subsection{Our results}

	
In this paper, we show that, unlike simple routing or complex diffusion, in complex routing problem for any $k\geq 2$, for the entire range of $\alpha$, the routing time is polynomial in $n$ both in expectation and with high probability for any decentralized routing algorithm.
Compared with simple routing or complex diffusion, the results at the sweetspot of $\alpha =2$ are the most interesting:
	simple routing has routing time $O(\log^{2} n)$ in expectation \cite{Kleinberg2000small} and complex diffusion has an upper bound of $O(\log^{k+1.5} n)$ in expected diffusion time \cite{Ghasemiesfeh2013complex}, while complex routing
	has a lower bound of $\Omega(n^{\frac{1}{4}})$ in expected routing time, for any $k\geq 2$.
This exponentially wide gap indicates intrinsic difference between complex routing and simple routing or complex diffusion. 
We further show that if we allow activating $m$ nodes in one step, the routing time is lower bounded
	by $\Omega(n^{\frac{1}{4}}/m)$, which means that
	to get a polylogarithmic upper bound on the routing time $m$ has to be
	$\Omega(n^{\frac{1}{4}}/\log ^c n)$ for some constant $c$.

Our main contribution is that
	we propose the study of complex routing, and prove that the routing time has polynomial
		lower bound in the entire range of $\alpha$ for complex routing.
Our results indicate that
	complex routing is much harder than complex diffusion and the routing time of complex contagion differs exponentially compared to simple contagion at sweetspot.

\subsection{Additional Related Work}

Social and information networks and network diffusions have been extensively studied, and a comprehensive
	coverage has been provided by recent textbooks such as \cite{Easley2010networks,Newman2010networks}.
In this section, we provide most related work in addition to the ones already discussed in the introduction.

Since the proposal of the small-world network models by \cite{Watts1998collective,Kleinberg2000small},
	many extensions and variants have been studied.
For example, Kleinberg proposed a small-world model based on tree structure \cite{Kleinberg2001small},
	Fraigniaud and Giakkoupis extended the model to allow power-law degree distribution
	\cite{FraigniaudG09} or arbitrary base graph structure \cite{FraigniaudG10}.

In terms of network diffusion, a line of research initiated in 
	\cite{Kempe2003maximizing,Kempe2005influential} studied
	the maximization problem of finding a set of small seeds to maximize the influence spread,
	usually under a stochastic diffusion model.
For Chen et al. \cite{Chen2009efficient} provided efficient influence maximization algorithms for large-scale
	networks, while Chen~\cite{Chen2008approximability} proved that minimizing the size of the seed set
	for a given coverage in the fixed threshold model is hard to approximate to any polylogarithmic
	factor.
	
Threshold behavior is also studied in bootstrap percolation \cite{Adler1991bootstrap}, where all
	nodes have the same threshold and initial seeds are randomly selected.
Bootstrap percolation focuses on the study of the critical fraction $f$ of the seed nodes selected so
	that the entire network is infected in the end.
The network structures investigated for bootstrap percolation include grid \cite{Chalupa1979bootstrap},
	trees \cite{Balogh2006bootstrap}, random regular graphs \cite{Balogh2007bootstrap},
	complex networks \cite{Baxter2010bootstrap} etc.

The rest of the paper is organized as follows.
Section~\ref{sec:model} provides the technical model and problem definitions.
Sections~\ref{sec:routing} presents the results and analyses on complex routing.
We conclude the paper in Section~\ref{sec:conclude}.

\section{Model and Problem Definitions} \label{sec:model}
We now provide the precise definitions of the network model,
	the propagation model, and the problems we are studying in this paper.

\subsection{Kleinberg's Small-World Networks}
The Kleinberg's small-world network model defines a random graph based on a set $V$ of $n$ nodes organized in a $\sqrt{n} \times \sqrt{n}$ two-dimensional grid~\cite{Kleinberg2000small}.
For convenience, we connect the top boundary nodes of the grid with the corresponding
	bottom boundary nodes, and
	connect the left boundary nodes with the corresponding right boundary nodes, creating a two-dimensional torus,
	in which the positions of all nodes are symmetric.
For nodes $u$ and $v$ on the torus, the Manhattan distance $|uv|$ between them is the shortest distance
	from $u$ to $v$ (or $v$ to $u$) using grid edges.

There are two types of edges in this random graph: \textit{strong ties} and \textit{weak ties}.
Strong ties refer to the undirected edges between any pair of nodes with Manhattan distance no more than $p$, where $p \geq 1$ is a universal constant.
Weak ties refer to random edges connecting any node $u$ with other possibly remote nodes $v$ in the grid.
Each node $u$ has $q$ weak tie connections created independently from one another, and the $i$-th
	weak tie initiated by $u$ has endpoint $v$ with probability proportional to $1/{|uv|}^\alpha$, where $\alpha\geq 0$ is a parameter of the model.
In order to get the probability distribution of weak ties, we multiply $1/{|uv|}^\alpha$ by the normalizing factor $\mathcal{Z} = 1/\sum_{v\in V}|uv|^{-\alpha}$ (on a torus, this value is the same for any $u\in V$).
For a node $u$ in the network, $u$'s {\em grid-neighbors} are nodes linked with $u$ through strong ties while {\em weak-neighbors} are nodes linked with $u$ through weak ties.

The original network model by Kleinberg \cite{Kleinberg2000small} considers the weak tie from $u$ to $v$
	as a directed edge, and we call it the {\em directed Kleinberg's small-world network model},
	while some work
	including \cite{Ghasemiesfeh2013complex} considers the weak ties as undirected edges.
Define random graph $G(n,k,\alpha)$ as directed Kleinberg's small-world network
with $n$ nodes and parameter $\alpha$ and $p=q=k$.
We only consider directed network models in this paper.

\subsection{Routing of Complex Contagion}

We model the propagation of information, disease, or innovations in a network as a {\em contagion}.
Each node in a network has three possible states --- {\em inactive}, {\em exposed}, {\em infected}
	(or {\em activated}), and a node can transformed from the inactive state to the exposed state and
	then to the infected state, but not in the reverse direction.

A contagion proceeds in discrete time steps $0,1,2,\ldots$.
At time $t\ge 1$, a node becomes exposed if at time $t-1$ at least $k$ of its neighbors (or in-neighbors
	in the case of directed networks) are infected.
An exposed node may become infected immediately or at a later step, which will be specified later.
A {\em simple contagion} refers to the contagion with $k=1$, that is, one infected neighbor is enough to
	expose (and potentially infect) the node,
	while a {\em complex contagion} refers to the case of $k\ge 2$, that is, at least
	two infected neighbors are needed to infect a new node.
We refer the complex contagion with $k\ge 2$ as $k$-complex contagion.

%
%
%

We study a different propagation phenomenon closer to the decentralized routing behavior
	studied in \cite{Kleinberg2000small} originally for the small-world network model, which we call
	{\em routing of complex contagion}, or simply {\em complex routing}.
	
To study $k$-complex routing, at time $0$, we set $k$ consecutive nodes on the grid in one dimension
	as infected initially, which we refer as {\em seed nodes}.
For convenience, we also set $p=k$.
When $p=k$, the $k$-complex routing is guaranteed to infect all nodes eventually through strong ties only.
In complex routing, we have a target node $t$ besides the set of $k$ initial seed nodes.

The task is to infect or activate node $t$ as fast as possible.
We can only select one exposed node to activate at each time step.
Moreover, when selecting the node to activate at time $i$, one only knows the
	out-neighbors  of already activated nodes since decentralized routing is applied.
This corresponds to the situation where a group of people try to influence a target by gradually growing
	their allies in the social network towards the target,
	and they only know the friends of their allies and try to recruit
	one of them into the allies at the next time step.
Note that when $k=1$, $k$-complex routing is essentially the decentralized simple routing studied in \cite{Kleinberg2000small}.
 

To study how fast the routing could be successful, we define the {\em routing time} as the number of
	time steps needed to activate the farthest target node $t$ from the seed node in terms of the
	Manhattan distance.


\def \doublelinked {\stackrel{ \geq 2}{\longleftrightarrow}}
\def \linked {\stackrel{ \geq 1}{\longleftrightarrow}}
\def \firstto {\stackrel{ 1st}{\longrightarrow}}
\def \secondto {\stackrel{ 2nd}{\longrightarrow}}
\def \doubleto {\stackrel{ \geq 2}{\longrightarrow}}

\section{Results on Complex Routing} \label{sec:routing}
When studying complex routing, we use the directed Kleinberg's small-world network model, same as the model originally proposed by Kleinberg in
	\cite{Kleinberg2000small} for decentralized routing.
As described in the model, we consider decentralized routing in which
	a node can only send activation to its out-neighbors.
Hence only when a node is pointed to by edges from $k$ different activated nodes it becomes exposed. For the strong tie, we still treat them as undirected or bi-directional.
In each time step, we only have the knowledge of the current activated nodes and the out-neighbors of the current activated nodes.
This allows us to apply the Principle of Deferred Decisions \cite{Motwani1995randomized}
	in the same way
	as applied in \cite{Kleinberg2000small}, which means that the weak ties of a node $u$ are defined and known only when $u$ is activated.
Initial seeds set is a set of $k$ consecutive nodes, so the $k$-complex routing will eventually activate target $t$ when we set $p = k$ in Kleinberg's small-world network model.

We consider a $2$-complex routing task from a pair of grid neighbor nodes $S_0 = \{s_0^1, s_0^2\}$  to a destination $t$ where $s_0^1,s_0^2$ have Manhattan distance of $1$ on the grid.
In this paper, we discuss the routing with initial grid distance of $|s_0^1t|=\Theta(\sqrt{n})$.
The strategy of activating nodes from exposed nodes set is not restricted. A special scheme is choosing the node with smallest Manhattan distance to $t$ in each time step, which is the greedy algorithm.
But our result holds for any decentralized node selection schemes, even randomized ones.
The following theorem provides the lower bound result on the routing time.

\begin{theorem}\label{thm:routing}
For any decentralized routing schemes (even randomized ones), the routing time of $2$-complex routing in $G(n,2,\alpha)$ has the following lower bounds based on the parameter $\alpha$, for any small $\varepsilon > 0$:

\begin{enumerate}
\item
For $\alpha\in [0,2)$, the routing time is $\Omega( n^{\frac{1-\varepsilon}{\alpha + 2}})$ with probability at least $1-O(n^{-\varepsilon})$ and the expected routing time is $\Omega( n^{\frac{1}{\alpha + 2}})$.
\item
For $\alpha = 2$, the routing time is $\Omega(n^{\frac{1}{4}})$ with probability at least $1-O(\frac{1}{\log n})$ and the expected routing time is $\Omega(n^{\frac{1}{4}})$.
\item
For $\alpha\in (2,+\infty)$, the routing time is $\Omega( n^{\frac{\alpha-2\varepsilon}{2(\alpha+2)}})$ with
	probability at least $1-O(n^{-\varepsilon})$ and the expected routing time is $\Omega( n^{\frac{\alpha}{2(\alpha+2)}})$.

\end{enumerate}
\end{theorem}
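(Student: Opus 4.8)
The plan is to show that activating $t$ forces the infected set to advance a distance $\Theta(\sqrt n)$ toward $t$, and that every advance beyond the reach of strong ties must be carried by a \emph{wide bridge} --- a pair of weak ties, emanating from already-infected nodes, whose endpoints fall within strong-tie distance of one another. I would first isolate this as a structural lemma: because $k=2$, a single weak tie reaching a node $w$ far from the current infected cluster leaves $w$ with only one infected in-neighbor and hence unexposed, and a lone infected node cannot by itself expose any further node, since its strong-tie neighbors also have only one infected neighbor. Consequently, the first infection appearing in any region not yet within distance $p=2$ of the infected set requires two distinct infected nodes to send weak ties into a common radius-$2$ window. This reduces the whole routing problem to bounding how fast such coincident pairs can chain the frontier from the seeds $S_0$ to $t$.

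Next I would set up the probability estimates using the Principle of Deferred Decisions exactly as in the model: the two weak ties of a node are drawn only at the step it is infected, so after $T$ steps only $2T$ weak ties have been exposed, and these are the only randomness the (adaptive, possibly randomized) routing scheme can exploit. For a fixed pair of infected sources $u_1,u_2$ lying at distance at least $\rho$ from a target window near $t$, the probability that both send a weak tie into a common radius-$2$ cell is at most $\mathcal{Z}^2\rho^{-2\alpha}$ per cell; summing over the $\Theta(r^2)$ cells in a ball around $t$ of radius $r$ and over the at most $O(T^2)$ pairs gives a union bound of order $T^2\mathcal{Z}^2 r^{2-2\alpha}$ on the probability that any coincident pair ever lands inside that ball from outside. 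The three regimes then enter only through the normalizing constant: $\mathcal{Z}=\Theta(n^{(\alpha-2)/2})$ for $\alpha<2$, $\mathcal{Z}=\Theta(1/\log n)$ at $\alpha=2$, and $\mathcal{Z}=\Theta(1)$ for $\alpha>2$.

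I would then combine the two ways the frontier can reach a ball of radius $r$ about $t$: either it crawls in through strong ties, which moves the frontier at most $2$ per step and hence needs $\Omega(r)$ steps to cross a shell of width $r$, or it is seeded inside by a wide bridge, whose probability is the union bound above. Organizing a geometric sequence of shells in the range $[\,|s_0^1 t|/2,\ |s_0^1 t|\,]$ and choosing the radius $r$ that balances the crawling cost $\Omega(r)$ against the bridging probability $T^2\mathcal{Z}^2 r^{2-2\alpha}$ yields the claimed thresholds; the exponents $\frac{1}{\alpha+2}$ and $\frac{\alpha}{2(\alpha+2)}$ come precisely from trading the ball-volume factor $r^{2}$ against the connection-decay factor $r^{-2\alpha}$ in this balance, with $\mathcal{Z}$ supplying the $n$-dependence. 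Showing that for $T$ below the stated bound no coincident pair reaches the ball while crawling cannot cross it gives failure with the stated high probability; the expectation bounds then follow routinely, since $E[T]\ge r\cdot\Pr[\,T\ge r\,]$ and the failure probability below the threshold is $1-o(1)$.

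I expect the main obstacle to be the \emph{adaptivity} of the routing scheme: the identities and locations of the sources $u_1,u_2$ are not fixed in advance but are chosen as a function of the weak ties already revealed, so a naive union bound over a fixed pair set is not immediately valid. Handling this requires phrasing the argument as a step-by-step exposure in which, at the step a node is infected, its two fresh weak ties are revealed and the conditional probability that they complete a long wide bridge (given the entire past) is bounded uniformly, so that the per-step contributions sum into a supermartingale or a union bound over the $T$ revelation events. A secondary difficulty is making the reduction lemma fully rigorous on the torus --- ensuring that every advance past strong-tie range is genuinely attributable to a coincident weak-tie pair and that no cheaper bootstrapping (for example, chains of single ties aided by previously placed isolated infections) can circumvent the width-$2$ requirement --- which is exactly the point where the threshold $k\ge 2$ is used.
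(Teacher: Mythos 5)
Your structural reduction is sound and is in fact the same one the paper uses: with $k=2$, any single-step advance beyond the strong-tie range forces two weak ties from already-activated nodes to coincide on a common node, isolated bridgeheads cannot self-propagate, and adaptivity is handled by deferred decisions (the fresh ties of the node activated at step $i$ are independent of everything revealed before step $i$). The gap is in how you turn this into a probability bound. Your excluded event --- a coincident pair landing inside a ball $B(t,r)$ with both sources at distance $\geq\rho\approx r$ from it --- is not the complement of ``the routing must crawl $\Omega(r)$.'' The scheme can instead chain \emph{short} bridges: a bridge of length $\ell\ll r$ lands in the annulus, one later weak tie landing within strong-tie range of that bridgehead exposes a companion (the mixed strong-plus-weak mode your dichotomy omits), the pair then crawls locally and launches further short bridges, and so on across the annulus. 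None of these events is covered by your union bound, and they are not negligible: at $\alpha=2$ a cascade of length-$\ell$ bridges crosses distance $r$ in roughly $\max(r/\ell,\sqrt{r\ell}\,\log n)$ steps, which is $o(r)$ for suitable $\ell$. Your own arithmetic signals the problem: with $T=cr$ your bound $T^2\mathcal{Z}^2r^{2-2\alpha}$ is $O(c^2/\log^2 n)$ at $\alpha=2$ \emph{for every} $r\leq\sqrt n/2$, so your balance would ``prove'' an $\Omega(\sqrt n)$ lower bound, and at $\alpha=1$ it gives nearly $\Omega(n^{1/2})$ instead of the claimed $\Omega(n^{1/3})$. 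Contrary to your last claim, the exponents $\frac{1}{\alpha+2}$ and $\frac{\alpha}{2(\alpha+2)}$ cannot be recovered from trading the ball volume $r^2$ against the decay $r^{-2\alpha}$.

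The correct assembly, which is the paper's, keeps your wide-bridge lemma but thresholds the \emph{per-step advance} rather than the landing region. Let $d_i$ be the distance from the activated-or-exposed set to $t$. If $d_{i-1}-d_i>L$ (with $L\geq p$), the newly exposed node $u_i$ must receive one fresh weak tie of length $>L$ from the just-activated node $s_i$ \emph{and} one previously revealed weak tie from $S_{i-1}$; hence $s_i$'s fresh tie must hit one of the $O(T)$ already-determined endpoints $X_i$ at distance $>L$, which conditioned on the history has probability $O(T\mathcal{Z}L^{-\alpha})$. Note the single power of $\mathcal{Z}$: the second tie of the bridge is already fixed, so you do not pay for it. Equivalently, if you insist on paying for both ties you must sum $\mathcal{Z}^2|u_1w|^{-\alpha}|u_2w|^{-\alpha}$ over \emph{all} landing nodes $w$ with both distances $\geq L$, and the normalization $\sum_w\mathcal{Z}|u_1w|^{-\alpha}=1$ collapses this to $\mathcal{Z}L^{-\alpha}$ per pair --- your $\mathcal{Z}^2 r^{2-2\alpha}$ survives only because of the invalid restriction to cells near $t$. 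Union over the $T$ steps gives failure probability $O(T^2\mathcal{Z}L^{-\alpha})$, and on the complement the total advance is at most $TL$, which must reach $\Theta(\sqrt n)$ for the routing to succeed. Balancing $TL=\Theta(\sqrt n)$ against $T^2\mathcal{Z}L^{-\alpha}=o(1)$, with $\mathcal{Z}=\Theta(n^{(\alpha-2)/2})$, $\Theta(1/\log n)$, $\Theta(1)$ in the three regimes, yields exactly $\Omega(n^{\frac{1}{\alpha+2}})$, $\Omega(n^{\frac14})$, and $\Omega(n^{\frac{\alpha}{2(\alpha+2)}})$. Finally, the paper handles randomized schemes by a separate averaging step (Markov inequality over the algorithm distribution); your per-step conditional bound can subsume this, but it needs to be stated explicitly rather than folded silently into the deferred-decision argument.
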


First we give some necessary definitions.
For a set of nodes $S$, define $\mathcal{E}(S)$ to be the set of exposed nodes for the current activated set $S$, namely $\mathcal{E}(S) = \{x\notin S\ |\ x \mbox{ has at least two in-neighbors in set } S\}$.
In a routing protocol, let $S_i$ be the set of the current activated nodes in time $i$. In time step $i$, we can choose at most {\em one} node $u \in \mathcal{E}(S_{i-1})$, and activate $u$ (which means we add $u$ to $S_{i-1}$ in time $i$ and obtain $S_{i}$). From the definition of $\mathcal{E}(S)$ we know that complex routing proceeds following
	 the direction of edges in directed Kleinberg's small-world model.



\subsection{Proof of Deterministic Scheme}
We consider deterministic decentralized routing schemes first, and in next subsection we will show
that our lower bounds still hold for randomized schemes.
First we give the proof of routing time for $\alpha = 2$.

Suppose $S_0, S_1,\cdots, S_{\ell}$ is the sequence of activated sets of nodes in routing where $S_i$ is the set of current activated nodes in time step $i$.
The initial seeds are $\{s_0^1,s_0^2\}$ so $S_0 = \{s_0^1,s_0^2\}$, $S_i = \{s_0^1,s_0^2,s_1,\cdots,s_{i}\}$ and in time $i \geq 1$ we add a new node $s_{i}$ selected from $\mathcal{E}(S_{i-1})$, particularly $s_l=t$.
Let $d_i = d(S_i \cup \mathcal{E}(S_i),t)$, where $d(S,u)$ is the minimum Manhattan distance between node $v\in S$ and $u$. It is easy to observe that $d_i$ is a non-increasing sequence and $d_{\ell-1} = d_{\ell}= 0$.
For convenience, we write $s_0^1$ as $S_{-1}$
and define that $d_{-1} = |s_0^1t| = \sqrt{n}$.
We then prove that when the parameter $\alpha = 2$, $\Pr(\forall~0 \leq i < cn^{\frac{1}{4}}, d_{i-1}-d_{i}\leq n^{\frac{1}{4}})$ is high enough, where $c<1$ is a positive constant we will set later. Define event $\chi = \{\forall~0 \leq i < cn^{\frac{1}{4}}, d_{i-1}-d_{i}\leq n^{\frac{1}{4}}\}$. Event $\chi$ means from time step $0$ to $cn^{1/4}-1$, the Manhattan distance between the current activated set and target $t$ decrease at most $n^{\frac{1}{4}}$ in each time step. 

\begin{lemma}
\label{lem:routing}
For decentralized $2$-complex routing in directed Kleinberg's small-world network $G(n,2,\alpha)$ with $\alpha = 2$, given the initial seeds $\{s_0^1,s_0^2\}$ and farthest target $t$ with $|s_0^1t| = \Theta(\sqrt{n})$,
then for some suitable constant $c \in (0,1)$,
$$\Pr(\forall~0 \leq i < cn^{\frac{1}{4}}, d_{i-1}-d_{i}\leq n^{\frac{1}{4}})\geq 1-O(\frac{1}{\log n}).$$
\end{lemma}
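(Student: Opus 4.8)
The plan is to bound the complement of event $\chi$, i.e.\ to show $\Pr[\exists\, 0\le i < cn^{1/4} : d_{i-1}-d_i > n^{1/4}] = O(1/\log n)$, by reducing a large one-step jump to an unlikely coincidence among the random weak ties. First I would establish a structural fact: if the frontier distance drops by more than $n^{1/4}$ at step $i$, then some node $x$ with $d(x,t) < d_{i-1}-n^{1/4}$ must have just entered $\mathcal{E}(S_i)$, and hence must have two in-neighbors among the activated set $S_i$. Since every node of $S_i$ lies at distance at least $d_{i-1}$ from $t$ (because $S_{i-1}\cup\mathcal{E}(S_{i-1})$ does and $s_i\in\mathcal{E}(S_{i-1})$), the triangle inequality forces both of these in-edges to span grid distance greater than $n^{1/4}$. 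As strong ties only connect nodes within distance $p=2$, for large $n$ neither in-edge can be a strong tie; both must be weak ties of length exceeding $n^{1/4}$. Thus a jump larger than $n^{1/4}$ implies the event $B$ that two distinct activated nodes send \emph{long} weak ties (length $>n^{1/4}$) to a common head, and it suffices to bound $\Pr[B]$.

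To bound $\Pr[B]$ I would first record that for $\alpha=2$ the normalizer satisfies $\mathcal{Z}^{-1}=\sum_{v}|uv|^{-2}=\Theta(\log n)$ on the torus, since $\Theta(r)$ nodes sit at distance $r$ and $\sum_{r=1}^{\Theta(\sqrt n)} r\cdot r^{-2}=\Theta(\log n)$. Consequently a single weak tie from any node $u$ lands on a fixed node $v$ with probability $|uv|^{-2}\mathcal{Z}=\Theta(|uv|^{-2}/\log n)$, so a \emph{long} weak tie hits any fixed node with probability at most $\Theta(n^{-1/2}/\log n)$. Over the first $cn^{1/4}$ steps at most $2(cn^{1/4}+2)=O(n^{1/4})$ weak ties are revealed, so at most $O(n^{1/4})$ nodes are ever heads of long weak ties.

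I would handle the adaptivity of the routing scheme by revealing weak ties sequentially under the Principle of Deferred Decisions: when $s_i$ is activated, its two weak ties are fresh draws independent of the entire history, because the scheme selects $s_i$ using only previously revealed out-edges, which never include $s_i$'s own ties. The first occurrence of $B$ must be triggered by a newly revealed long tie whose head coincides with the head of a previously revealed long tie (a single node pointing twice yields only one in-neighbor, so the two colliding ties come from different nodes and at least one is strictly earlier). Conditioned on the history, the probability that a freshly revealed tie is long and hits one of the $O(n^{1/4})$ previously marked heads is at most $O(n^{1/4})\cdot\Theta(n^{-1/2}/\log n)=O(n^{-1/4}/\log n)$; a union bound over the $O(n^{1/4})$ revealed ties gives $\Pr[B]=O(1/\log n)$. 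Choosing the constant $c\in(0,1)$ small enough keeps this probability $O(1/\log n)$ and simultaneously makes the total possible progress $cn^{1/4}\cdot n^{1/4}=c\sqrt n$ strictly below $d_{-1}=\Theta(\sqrt n)$, which is what will later yield the $\Omega(n^{1/4})$ routing-time bound.

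The main obstacle I anticipate is precisely the interaction between adaptivity and the union bound: because the adversary chooses which exposed node to activate based on revealed randomness, one cannot fix the pairs of colliding ties in advance. The deferred-decision argument is what makes the sequential per-tie bound legitimate, and care is needed to verify that the conditional distribution of each newly activated node's ties is genuinely the unperturbed weak-tie law. A secondary technical point is the clean separation of strong ties from the collision analysis, which relies on deep nodes being unreachable by the short (distance $\le 2$) strong ties from the activated set.
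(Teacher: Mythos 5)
Your proposal is correct and follows essentially the same route as the paper's proof: the triangle-inequality reduction showing that a drop of more than $n^{1/4}$ forces a newly exposed node to receive two long weak ties from distinct activated nodes, the Principle of Deferred Decisions to treat the newly activated node's ties as fresh draws independent of the revealed history, and the union bound of $O(n^{1/4})$ revealed ties times $O(n^{1/4})$ candidate heads times the per-tie probability $O(\mathcal{Z}n^{-1/2}) = O(1/(n^{1/2}\log n))$. The only cosmetic difference is that you phrase the bad event as a tie-head collision analyzed at its first occurrence, whereas the paper bounds, step by step, the probability that $s_i$'s fresh tie lands long on the set $X_i$ of heads of ties initiated by $S_{i-1}$; the underlying calculation is identical.
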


\begin{proof}
Let $u_{i} = \arg\min_x \{ d(x,t) | x\in S_i \cup \mathcal{E}(S_{i}) \}$,
so $u_{i}$ is the node that is closest to node $t$ and can be activated by set $S_i$ or belong to $S_i$.
Since $u_{i-1}$ is the node that with the shortest Manhattan distance to $t$ among $\mathcal{E}(S_{i-1}) \cup S_{i-1}$
and $s_i \in \mathcal{E}(S_{i-1})$, $|s_{i}t| \geq |u_{i-1}t| = d_{i-1}$.
Thus if $d_{i-1}-d_{i}> 0$, we know that $s_i$ is not the node closest to $t$ among $S_i \cup \mathcal{E}(S_{i})$ since $|s_{i}t| \geq d_{i-1}$.
Besides, we can also get that $u_i \in \mathcal{E}(S_i) \setminus \mathcal{E}(S_{i-1})$
and $s_i$ activate $u_i$ together with another node in $S_{i-1}$.
Combining with the definition that $|u_{i}t| = d_{i}$, we know $|s_{i}u_{i}| \geq |s_{i}t| - |u_{i}t| = d_{i-1}-d_{i}$. Hence we have the following conclusions:

If $d_{i-1}-d_{i}> n^{1/4}$ for $i \geq 1$, then we can conclude
(1) $|s_{i}u_{i}| > n^{\frac{1}{4}}$;
(2) $u_i$ is one of the out-neighbors of $s_i$, more specifically, $s_i$ initiates a weak tie to $u_i$;
(3) $u_i$ is exposed exactly in time step $i$, so there is exactly one weak tie from some node in $S_{i-1}$ to $u_i$.
For $i=0$, because $d(S_0,t) = d_{-1}$, so the gap between $d_{-1}$ and $d_0$ is caused by $u_0 \in \mathcal{E}(S_0)$. The conclusions still hold.

We define the set of nodes that are the endpoints of the weak ties initiated by $S_{i-1}$ as $X_i$.
$X_i$ is indeed the set of weak-neighbors in directed Kleinberg's small-world network.
Apparently $u_i \in X_i$ according to assertion (3) above.
If $d_{i-1}-d_{i}> n^{\frac{1}{4}}$ happens, $u_i$ can be reached by $s_i$ with a weak tie of distance at least $n^{\frac{1}{4}}$.
Define $u \to v$ as node $u$ initiates a weak tie with endpoint $v$.
By union bound, we have:
\begin{equation}
\begin{array}{ll}
& 1- \Pr(\forall~0 \leq i < cn^{\frac{1}{4}}, d_{i-1}-d_{i}\leq n^{\frac{1}{4}})\\
= & \Pr(\exists~0 \leq i < cn^{\frac{1}{4}}, d_{i-1}-d_{i}> n^{\frac{1}{4}}) \\
\leq & \sum_{i=0}^{cn^{\frac{1}{4}}-1}\Pr(d_{i-1}-d_{i}> n^{\frac{1}{4}}) \\
\leq & \sum_{i=0}^{cn^{\frac{1}{4}}-1}\Pr(s_i \to u_i , |s_iu_i| > n^{\frac{1}{4}} , u_i \in X_i) \\
\leq & \sum_{i=0}^{cn^{\frac{1}{4}}-1}\Pr(\exists~x \in X_i, s_i \to x, |s_ix| > n^{\frac{1}{4}}).
\end{array}
\label{eq:routing}
\end{equation}

Since there is $i+1$ nodes in the set $S_{i-1}$,
$S_{i-1}$ initiate $q(i+1)$ weak ties, which means that $|X_i| \leq q(i+1)$.
Denote $\mathcal{H}_{i}\subseteq 2^{V}$ to be the set of all sets of nodes with size no more than $q(i+1)$. Then we fix the randomness of $X_i$ and $s_i$:
\begin{equation*}
\begin{array}{ll}
& \Pr(\exists~x \in X_i, s_i \to x, |s_ix| > n^{\frac{1}{4}})\\
\leq & \sum_{C \in \mathcal{H}_{i}} \sum_{v \in V} \Pr \big( (X_i=C) \land  (s_i=v)
\land (\exists~x \in C, v \to x, |vx| > n^{\frac{1}{4}}) \big)\\
= & \sum_{C \in \mathcal{H}_{i}} \sum_{v \in V} \Pr \big( (X_i=C) \land  (s_i=v) \big)
\Pr(\exists~x \in C, v \to x, |vx| > n^{\frac{1}{4}})\\
\leq & \sum_{C \in \mathcal{H}_{i}} \sum_{v \in V} \Pr \big( (X_i=C) \land  (s_i=v) \big)
\sum_{x \in C} \Pr(v \to x, |vx| > n^{\frac{1}{4}})\\
\leq & \sum_{C \in \mathcal{H}_{i}} \sum_{v \in V} \Pr \big( (X_i=C) \land  (s_i=v) \big) \cdot
|C| \cdot 2\mathcal{Z}\frac{1}{n^{2\cdot 1/4}}\\
\leq &  q(i+1) \cdot 2\frac{\mathcal{Z}}{ n^{1/2}} \sum_{C \in \mathcal{H}_{i}} \sum_{v \in V}
\Pr \big( (X_i=C) \land  (s_i=v) \big)
=  2q(i+1) \cdot \frac{\mathcal{Z}}{ n^{1/2}}.
\end{array}
\end{equation*}
By the property of decentralized routing, event $\{(X_i=C) \land (s_i = v)\}$ only depends on the random set $S_{i-1}$ and the outgoing weak ties from $S_{i-1}$, and $v$ is not in $S_{i-1}$, while event $\{\exists x \in C, v \to x, |vx| > n^{\frac{1}{4}} \}$ only depends on the outgoing weak ties of the fixed node $v$. Thus event $\{(X_i=C) \land (s_i = v) \}$ is independent of event $\{ \exists x \in C, v \to x, |vx| > n^{\frac{1}{4}} \}$.
This gives us the first ``$=$" in the equation.
For a node $x$, if $|vx| \leq n^{\frac{1}{4}}$, then $\Pr(v \to x, |vx| > n^{\frac{1}{4}}) = 0$; otherwise, $\Pr(v \to x, |vx| > n^{\frac{1}{4}}) \leq 2p(v,x) \leq 2\mathcal{Z}\frac{1}{ n^{2\cdot 1/4}}$. Hence we have the third ``$\leq$".
Substitute it into Inequality~(\ref{eq:routing}):
\begin{equation*}
\begin{array}{ll}
& 1- \Pr(\forall~0 \leq i < cn^{\frac{1}{4}}, d_{i-1}-d_{i}\leq n^{\frac{1}{4}})\\
\leq & \sum_{i=0}^{cn^{\frac{1}{4}}-1}\Pr(\exists~x \in X_i, s_i \to x, |s_ix| > n^{\frac{1}{4}})\\
\leq & \sum_{i=0}^{cn^{\frac{1}{4}}-1} q(i+1) \cdot 2\frac{\mathcal{Z}}{ n^{1/2}} \\
\leq & cn^{\frac{1}{4}} \cdot qcn^{\frac{1}{4}} \cdot \Theta(\frac{1}{\log{n}})\frac{2}{ n^{1/2}} =  O(\frac{1}{\log n}).\\
\end{array}
\end{equation*}
\end{proof}

Due to the above lemma, it is easy to see that the routing time is at least $cn^{\frac{1}{4}}$ with high probability for $\alpha = 2$.
\begin{proof}[of Theorem~\ref{thm:routing} (deterministic routing scheme)]
Lemma~\ref{lem:routing} says, for $\alpha = 2$, in the first $cn^{\frac{1}{4}}$ steps, the grid distance between the current activated set and target $t$ decreases at most $n^{\frac{1}{4}}$ in each step. Thus, for the first $cn^{\frac{1}{4}}$ steps, target $t$ does not belong to the activated set and the routing procedure will continue.
Hence with probability of $1-O(\frac{1}{\log n})$, to activate the target $t$ in $G(n,2,\alpha)$ with $\alpha=2$, decentralized $2$-complex routing needs at least $cn^{\frac{1}{4}}$ time steps. The expected routing time is $cn^{\frac{1}{4}} \cdot (1-O(\frac{1}{\log n})) = \Omega(n^{\frac{1}{4}})$.

When $\alpha\in [0,2)$, like the proof in Lemma~\ref{lem:routing}, we can prove for small $\varepsilon > 0$,
\begin{equation*}
\begin{array}{ll}
& 1- \Pr(\forall~0 \leq i < cn^{\frac{1-\varepsilon}{\alpha+2}},
d_{i-1}-d_{i}\leq 2n^{\frac{\alpha + 2\varepsilon}{2(\alpha+2)}})\\
\leq & \sum_{i=0}^{cn^{\frac{1-\varepsilon}{\alpha+2}}-1}\Pr(\exists~x \in X_i, s_i \to x,
|s_ix| > 2n^{\frac{\alpha + 2\varepsilon}{2(\alpha+2)}})\\
\leq & \sum_{i=0}^{cn^{\frac{1-\varepsilon}{\alpha+2}}-1} 2(i+1) \cdot \mathcal{Z} \cdot 2(2n^{\frac{\alpha + 2\varepsilon}{2(\alpha+2)}})^{-\alpha} \\
\leq & cn^{\frac{1-\varepsilon}{\alpha+2}}
\cdot 2cn^{\frac{1-\varepsilon}{\alpha+2}} \cdot \Theta(\frac{1}{n^{1-\alpha/2}})\cdot
O(n^{-\frac{\alpha(\alpha + 2\varepsilon)}{2(\alpha+2)}})
=  O(n^{-\varepsilon}).\\
\end{array}
\end{equation*}

So the routing time is $\Omega(n^{\frac{1-\varepsilon}{\alpha+2}})$ with probability at least $1-O(n^{-\varepsilon})$. By setting $\varepsilon = 0$ and adjusting the parameter $c$, the expected routing time can be obtained.

When $ \alpha >2$, we can prove that for small $\varepsilon > 0$,
$$\Pr(\forall~0 \leq i < cn^{\frac{\alpha-2\varepsilon}{2(\alpha+2)}},
d_{i-1}-d_{i}\leq n^{\frac{1 + \varepsilon}{\alpha+2}}) \geq 1-O(n^{-\varepsilon})$$ like the proof above. Hence with probability at least $1-O(n^{-\varepsilon})$, we need $cn^{\frac{\alpha-2\varepsilon}{2(\alpha+2)}}$ time steps to find the target. 
Similarly we can get the bound for the expectation.
\end{proof}

\subsection{Proof of Randomized Scheme}
The lower bound of routing time still holds for randomized decentralized routing scheme. Here we just provide a brief proof for the case when $\alpha = 2$.
\begin{proof}[of Theorem~\ref{thm:routing} (randomized routing scheme)]

Let $\mathcal{A}$ be the set of all the deterministic decentralized routing schemes, $\mathcal{G}$ be the set of all the possible networks based on Kleinberg's small-world model with $\alpha = 2$ and $\pi$ be the corresponding distributions over $\mathcal{G}$. Further define $T(A,G')$ as the routing time of an algorithm $A \in \mathcal{A}$ applied on a network $G'\in \mathcal{G}$.

Since a randomized algorithm is just a distribution over all deterministic algorithms, all we need to show is that for any distribution $\mu$ over $\mathcal{A}$,
\begin{equation}\label{eq:randomized_alg1}
\Pr_{G'\sim \pi}(\mathbb{E}_{A\sim \mu}(T(A,G'))=\Omega(n^{1/4}))=1-O(\frac{1}{\log{n}}),
\end{equation}
which means that for any randomized scheme $A$, with high probability (w.r.t the distributions over $\mathcal{G}$), the expected routing time of $A$  is no less than $\Omega(n^{1/4})$. That is, any randomized scheme is slow for most of the input small-world networks.

From the previous proof we know any deterministic algorithms
$A\in \mathcal{A}$,
$\Pr_{G'\sim \pi}(T(A,G')=\Omega(n^{1/4}))=1-O(\frac{1}{\log{n}})$.
Thus there exists constant $c_1,c_2>0$ not depending on the scheme $A$, such that
$
\forall~A\in \mathcal{A},~\Pr_{G'\sim \pi}(T(A,G')\geq c_1n^{1/4})\geq 1-\frac{c_2}{\log{n}}.
$
Since it holds for all $A\in \mathcal{A}$, it is also true that
\begin{equation}\label{eq:randomized_alg3}
\Pr_{G'\sim \pi\atop A\sim \mu}(T(A,G')\geq c_1n^{1/4})\geq 1-\frac{c_2}{\log{n}}.
\end{equation}
We claim that
\begin{equation}\label{eq:randomized_alg2}
\Pr_{G'\sim \pi}(\mathbb{E}_{A\sim \mu}(T(A,G'))\geq \frac{c_1}{2}n^{1/4})\geq 1-\frac{3c_2}{\log{n}},
\end{equation}
which is a quantitive version of Eq.~(\ref{eq:randomized_alg1}). Suppose Inequality~(\ref{eq:randomized_alg2}) is false, then
$$
\Pr_{G'\sim \pi}(\mathbb{E}_{A\sim \mu}(T(A,G'))< \frac{c_1}{2}n^{1/4}) > \frac{3c_2}{\log{n}}.
$$
Let $\mathcal{G}_1\subseteq \mathcal{G}$ be the set of all the $G'$ satisfies $\mathbb{E}_{A\sim \mu}(T(A,G'))< \frac{c_1}{2}n^{1/4}$,
then the above inequality tells us that $\Pr_{G'\sim \pi}(G'\in \mathcal{G}_1) > \frac{3c_2}{\log{n}}$.
For any fixed $G'\in \mathcal{G}_1$, by Markov Inequality
$$\Pr_{A\sim\mu}(T(A,G')\geq c_1n^{1/4})\leq \frac{\mathbb{E}_{A\sim\mu}(T(A,G'))}{c_1n^{1/4}}<\frac{1}{2}.$$
So $\Pr_{A\sim\mu}(T(A,G')< c_1n^{1/4})\geq 1/2$ for any $G'\in \mathcal{G}_1$. Therefore,
$$\Pr_{A\sim\mu\atop G'\sim \pi}(T(A,G')< c_1n^{1/4}) > \frac{1}{2}\cdot \frac{3c_2}{\log{n}}=\frac{1.5c_2}{\log{n}},$$
which contradict to Inequality~(\ref{eq:randomized_alg3}).
\end{proof}

\subsection{Discussion and Extension}
We describe complex routing as the task of activating a node as fast as possible.
Here we consider the task of activating a target node $t$ that is $n^{\gamma}$ grid distance away from the seeds.
Similar hardness results can be inferred if we determine the step size and step number of routing cautiously.
Here we just sate the theorem and do not provide the redundant proof.
\begin{theorem}\label{thm:routing_dis}
For any decentralized routing schemes (even randomized ones), the routing time for $2$-complex routing to activate a node with Manhattan distance $n^{\gamma} (0<\gamma\leq \frac{1}{2})$ away in $G(n,2,\alpha)$ has the following lower bounds based on the parameter $\alpha$, for any small $\varepsilon > 0$:
\begin{enumerate}
\item
For $\alpha\in [0,2)$, the routing time is $\Omega( n^{\frac{2\gamma(1-\varepsilon)}{\alpha + 2}})$ with probability at least $1-O(n^{-\varepsilon'})$ where $\varepsilon' = \max{\{ 2\gamma\varepsilon, (1-2\gamma)(1-\alpha/2)\}}$ and the expected routing time is $\Omega( n^{\frac{2\gamma}{\alpha + 2}})$.
\item
For $\alpha = 2$, the routing time is $\Omega(n^{\frac{\gamma}{2}})$ with probability at least $1-O(\frac{1}{\log n})$ and the expected routing time is $\Omega(n^{\frac{\gamma}{2}})$.
\item
For $\alpha\in (2,+\infty)$, the routing time is $\Omega( n^{\frac{\gamma(\alpha-2\varepsilon)}{\alpha+2}})$ with
	probability at least $1-O(n^{-\varepsilon})$ and the expected routing time is $\Omega( n^{\frac{\gamma\alpha}{\alpha+2}})$.

\end{enumerate}
\end{theorem}

We can obtain the same lower bound of routing time for $k$-complex routing. To ensure the success of complex routing, let $p = q = k$ for the Kleinberg's small-world network model and the size of seed nodes is $k$. The result is the same with $2$-complex routing so we omit it.

Next, we extend our results to complex routing where at most $m$ nodes can be activated in each time step.
When $m=1$, the result is what we covered in Theorem~\ref{thm:routing} for
	complex routing.
When we do not restrict $m$, complex routing becomes complex diffusion.
Thus a general $m$ allows us to connect complex routing with diffusion, and see how
	large $m$ is needed to bring down the polynomial lower bound in complex routing.
From the theorem we know that we would not get polylogarithmic routing time for complex routing in $G(n,2,2)$ where $m$ nodes can be activated in each step, unless 
$m = n^{\frac{1}{4}} / \log^{O(1)} n$.


\begin{theorem}
In decentralized routing, for $k$-complex routing in $G(n,2,\alpha)$, if at most $m$ nodes can be activated in each time step, routing time 
has the following lower bounds based on the parameter $\alpha$, for any small $\varepsilon > 0$:
\begin{enumerate}
\item
For $\alpha\in [0,2)$, the routing time is $\Omega( n^{\frac{1-\varepsilon}{\alpha + 2}}/m)$ with probability at least $1-O(n^{-\varepsilon})$ and the expected routing time is $\Omega( n^{\frac{1}{\alpha + 2}}/m)$.
\item
For $\alpha = 2$, the routing time is $\Omega(n^{\frac{1}{4}}/m)$ with probability at least $1-O(\frac{1}{\log n})$ and the expected routing time is $\Omega(n^{\frac{1}{4}}/m)$.
\item
For $\alpha\in (2,+\infty)$, the routing time is $\Omega( n^{\frac{\alpha-2\varepsilon}{2(\alpha+2)}}/m)$ with
	probability at least $1-O(n^{-\varepsilon})$ and the expected routing time is $\Omega( n^{\frac{\alpha}{2(\alpha+2)}}/m)$.

\end{enumerate}
\end{theorem}

\begin{proof}
Assuming that $S$ is the set of current activated nodes.
In next time step, we can activate $m$ nodes with the knowledge of the out-neighbors of $S$.
But consider the original complex routing, we just activate one node in each step and we have $m$ time steps to activate nodes.
After each small step, we have the knowledge of the newly added node. Hence the method of activating $m$ nodes with $m$ time steps is more effective than infecting $m$ nodes in just one time step.
Therefore if we need $T$ time steps to find the target with original complex-routing, the routing time with activating $m$ nodes in each time step is at least $\frac{T}{m}$.
The expected routing time is $\frac{T}{m} \cdot (1-O(\frac{1}{\log n}))$.
Then the theorem follows.
\end{proof}

\section{Conclusion} \label{sec:conclude}

In this paper, we study the routing of complex contagion in Kleinberg's small-world networks.
We show that for complex routing the routing time is lower bounded by a polynomial in the number of nodes in the network
	for the entire range of $\alpha$, which is qualitatively different from the polylogarithmic
	upper bound in both complex diffusion
	and simple routing for $\alpha = 2$.
Our results indicate that
	complex routing is much harder than both complex diffusion and
	simple routing at the sweetspot.

There are a number of future directions of this work.
One may look into complex routing for undirected small-world networks or other variants of
	the small-world models.
The qualitative difference between complex diffusion and complex routing for the case of $\alpha = 2$
	may worth further investigation.
For example, one may study if there is similar difference for a larger class of graphs, and under
	what network condition complex routing permits polylogarithmic solutions.

%


\begingroup
\bibliographystyle{plainnat}
\bibliography{complexRouting}
\endgroup

\end{document}